
\documentclass[%
    11pt,
    a4paper,
    DIV=12,
    BCOR=0mm,
    numbers=endperiod,
    twoside=semi,
    headings=small
]{scrartcl}

\usepackage[automark,headsepline]{scrpage2} 
\newcommand{\quoteparagraph}[1]{\noindent{\normalsize \bfseries \sffamily #1}\hspace{1em}}

\usepackage{etex}
\usepackage[utf8]{inputenc}
\usepackage[english]{babel}
\usepackage[T1]{fontenc}
\usepackage{graphicx}
\usepackage{subfig}
\captionsetup[subfigure]{labelformat=simple,listofformat=subsimple}

\usepackage{amsmath}
\usepackage{amssymb}
\usepackage{amsthm}
\usepackage{thmtools}
\usepackage{mathtools}
\usepackage{enumerate}
\usepackage{listings}
\usepackage{multicol}
\usepackage{stmaryrd}
\usepackage[arrow, matrix, curve]{xy}
\usepackage{framed}
\usepackage{leftidx}
\usepackage{hyperref}
\usepackage{float}

\usepackage{tikz}
\usepackage{circuitikz}
\usepackage{pgfplots}
\usepackage{ifthen}
\usetikzlibrary{arrows,automata,shapes,calc}
\tikzstyle{e}=[
        ->,
        semithick,
        node distance=2.5cm,
        auto,
        >=stealth',
        initial text={}
]

\newcommand{\candle}[5]{
  \def\candlewidth{0.3};
  \draw[fill=white,thick] (#1,#4) -- (#1,#5);
  \ifthenelse { \lengthtest{#2 pt < #3 pt} }
    {\draw[thick,fill=white!100] (#1-\candlewidth/2,#2) rectangle (#1+\candlewidth/2,#3);}
    {\draw[thick,fill=black!50 ] (#1-\candlewidth/2,#2) rectangle (#1+\candlewidth/2,#3);}
}

\newcommand{\candlepath}[1]{
  \def\width{1};

  \def\N{0};
  \foreach \x[count=\xi from 0,remember=\xi as \N] in {#1} {}

  \pgfmathsetmacro\h{\width/\N};

  \draw (0,0) \foreach \x[count=\xi from 0] in {#1} { \ifnum \xi>0 -- \fi (\h*\xi,\x) };

  \candlefrompath{#1}
}

\newcommand{\candlefrompath}[1]{
  \def\N{0};
  \def\O{0};
  \def\C{0};
  \pgfmathsetmacro{\H}{max(#1)};
  \pgfmathsetmacro{\L}{min(#1)};
  \foreach \x[remember=\x as \C] in {#1} {}
  \foreach \x[count=\xi from 0,remember=\x as \O] in {#1} {
    \ifnum \xi<1
      \breakforeach
    \fi
  }

  \candle{-0.3}{\O}{\C}{\H}{\L}
}

\DeclareSymbolFont{bbold}{U}{bbold}{m}{n}
\DeclareSymbolFontAlphabet{\mathbbold}{bbold}

\newtheorem{thm}{Theorem}[section]
\newtheorem{definition}[thm]{Definition}
\newtheorem{cor}[thm]{Corollary}
\newtheorem{sample}[thm]{Example}
\newtheorem{observation}[thm]{Observation}
\newtheorem{remark}[thm]{Remark}
\newtheorem{remarks}[thm]{Remarks}
\newtheorem{assumptions}[thm]{Assumptions}

\makeatletter
\g@addto@macro\bfseries{\boldmath}
\makeatother

\clubpenalty = 10000
\widowpenalty = 10000 \displaywidowpenalty = 10000


\newcommand{\IN}{\mathbb{N}}
\newcommand{\IR}{\mathbb{R}}

\newcommand{\im}{\operatorname{im}}
\newcommand{\inv}{^{-1}}
\newcommand{\mf}[1]{\mathfrak{#1}}
\newcommand{\mc}[1]{\mathcal{#1}}
\DeclarePairedDelimiter{\abs}{\lvert}{\rvert}
\newcommand{\set}[1]{\left\{#1\right\}}


\DeclareTextFontCommand{\emph}{\normalfont\bfseries}


\hypersetup{
  pdftitle    = {Correctness of Backtest Engines},
  pdfauthor   = {Robert Löw, Stanislaus Maier-Paape and Andreas Platen},
  pdfkeywords = {backtest evaluation, historical simulation, trading system, candle chart, imperfect data, price model, correctness test, backtest correctness},
  pdfborder={0 0 0.5}
}


\author{
  \normalsize \textsc{Robert Löw}\\[-0.3em]
    \small \textit{Institut f\"ur Mathematik, RWTH Aachen,}\\[-0.6em]
    \small \textit{Templergraben 55, D-52052 Aachen, Germany}\\[-0.6em]
    \small \href{mailto:robert.loew@rwth-aachen.de}{robert.loew@rwth-aachen.de}\\
  \\[-0.75em]
  \normalsize \textsc{Stanislaus Maier-Paape}\\[-0.3em]
    \small \textit{Institut f\"ur Mathematik, RWTH Aachen,}\\[-0.6em]
    \small \textit{Templergraben 55, D-52052 Aachen, Germany}\\[-0.6em]
    \small \href{mailto:maier@instmath.rwth-aachen.de}{maier@instmath.rwth-aachen.de}\\
  \\[-0.75em]
  \normalsize \textsc{Andreas Platen}\\[-0.3em]
    \small \textit{Institut f\"ur Mathematik, RWTH Aachen,}\\[-0.6em]
    \small \textit{Templergraben 55, D-52052 Aachen, Germany}\\[-0.6em]
    \small \href{mailto:platen@instmath.rwth-aachen.de}{platen@instmath.rwth-aachen.de}
}
\date{
  \vspace{0.25em}
  \normalsize\today
  \vspace{-7mm}
}
\title{
  \vspace{-2cm}
  \Large Correctness of Backtest Engines
  \vspace{-5mm}
}


\pagestyle{scrheadings}
\clearscrheadfoot
\cehead{\normalfont\small R. Löw, S. Maier-Paape, and A. Platen}  
\cohead{\normalfont\small Correctness of Backtest Engines}        
\cfoot{\normalfont Page \thepage} 


\begin{document}

	\maketitle
	\hrule

	\begin{quote}
	\small
	\quoteparagraph{Abstract}
	In recent years several trading platforms appeared which provide a backtest engine to calculate historic performance of self designed trading strategies on underlying candle data. The construction of a correct working backtest engine is, however, a subtle task as shown by Maier-Paape and Platen (cf. [\href{http://arxiv.org/abs/1412.5558}{arXiv:1412.5558}, q-fin.TR]). Several platforms are struggling on the correctness.

	In this work, we discuss the problem how the correctness of backtest engines can be verified.
	We provide models for candles and for intra-period prices which will be applied to conduct a proof of correctness for a given backtest engine if the here provided tests on specific model candles are successful.
	Furthermore, we hint to algorithmic considerations in order to allow for a fast implementation of these tests necessary for the proof of correctness.

	\quoteparagraph{Keywords} backtest evaluation, historical simulation, trading system, candle chart, imperfect data, price model, correctness test, backtest correctness

	\quoteparagraph{JEL classification}
	C15, 
	C63, 
	C88, 
	C99  

	\end{quote}


	\section{Introduction}

	The use of trading software allows for programming trading strategies which are then automatically executed.
	Typically, software solutions for this task also provide a backtest engine which evaluates the user-written scripts on historical data and provides the user with performance values, entries and exits of positions and further data which would have occurred if the trading script was used in the past.
	In general, the historical data is not exact in the sense that the price at a specific time is not known. Instead, prices are summed up for time periods in so called candle data or bar data which consists of the maximum price ($high$), the minimum price ($low$), the price at the beginning of the period ($open$) and the price at the end of the period ($close$) for a given period length (e.g. one candle per day).
	Due to this lack of information the unique decidability of several combinations of orders is lost when only candle data are available, see e.g. \cite{MP2014} and also the book of Pardo~\cite[Chapter 6, Section ``Software Limitations'']{Pardo2008} or Harris~\cite[Chapter 6]{Harris2008}.

	Before going into the details of our backtest correctness testing model, it is worth noting that, besides the above mentioned problems, there are also other limitations of backtesting which cannot be neglected, see e.g. the books of Chan~\cite[Chapter 3]{Chan2009}, Pardo~\cite[Chapter 6]{Pardo2008} and Harris~\cite[Chapter 6]{Harris2008} and for trading options the book of Izraylevich and Tsudikman~\cite[Chapter 5]{IT2012}. Since a backtest just simulates the behavior of a trading system over the past it is strongly limited in predicting the future. However, it is common to optimize a parameter dependent strategy over the historical data to maximize some objective function. Such an optimization process can rapidly get very time consuming. Therefore to efficiently backtest a trading system for different parameter choices Ni and Zhang~\cite{NZ2005} presented a method, but they do not explain the backtest evaluation itself. Computing such an ``optimal'' parameter setting does not ensure optimal parameters for the future but can even lead to tremendous losses. Here one often speaks of backtest overfitting, see \cite{BBP+2014,BBP+2014a,CP2014} and also \cite[Chapter 6]{Pardo1992} for a detailed discussion. Therefore even a correct backtest engine needs to be applied carefully. Nevertheless it gives important information about a trading strategy.

	The aim of this paper, however, is not about a ``right'' application of backtest results, but on how to verify whether or not the backtests are performed correctly in the first place.

	One instance where common software solutions often struggle is the non-uniqueness of correct results of backtest engines.
	Almost all of them do not consider this issue.
	Furthermore, even if situations are obviously uniquely decidable, severals platforms sometimes have incorrect evaluations of their backtest engines.
	It is, however, of great importance for the users to have reliable correct backtest evaluations.
	Therefore Maier-Paape and Platen~\cite{MP2014} asked how backtest engines should decide for certain given standard order setups like limit and stop entry orders combined with typical intra-period stop or target exit orders, when only candle data are available.
	They could limit these examinations to single candles, arguing that any trade can be split up into several candles with different active orders.
	For the order combinations discussed, they provided decision trees with which for a given candle, depending on open, close, high and low value, the correct behavior of a backtest engine can be determined.
	They also introduced decision modes which provide deterministic rules for situations which are not uniquely decidable due to the lack of information in candle data.
	However, due to the numerous possibilities which can occur in different situations, it is hard to verify whether a given backtest engine of some trading platform is calculating its backtests correctly or not.
	In this paper, we therefore provide means for an algorithmic approach to this problem.

	In order to check whether a given backtest engine behaves according to the requirements presented in \cite{MP2014}, it is necessary to design test data.
	That means that we are looking for a finite number of so called ``test candles''.
	After checking the backtest evaluation on these finitely many candles, we want to be able to conclude general correctness of the backtest engine under reasonable assumptions.

	This work is divided into five sections.
	In Section~\ref{sec:setup} we discuss the preliminaries under which our examinations are conducted and formalize the problem discussed.
	The main result of Section~\ref{sec:correctness} is the development of a concept of ``model candles'', which allows for a proof of correctness of backtest engines in specific situations.
	In Section~\ref{sec:IPFs}, a model for intra-period prices will be introduced which allows to obtain the desired condition that guarantees the completeness of all model candles and all corresponding possible outcomes of a backtest.
	We end this paper in Section~\ref{sec:conclusion} with our conclusions.


	\section{Problem statement and preliminaries}
	\label{sec:setup}

	First, we want to specify the situation which we examine.
	In Section~\ref{subsec:assumptions} we give some assumptions needed to make the decisions for a backtest unique. Since we can reduce the problem of testing for correctness to one single candle for each situation we define the setup for such situations in Section~\ref{subsec:setup_ipf}. The concept of a backtest engine is explained in Section~\ref{subsec:results}.

	\subsection{Assumptions}\label{subsec:assumptions}

	At this point, we will not make specific assumptions about the candles which we examine.
	Clearly, every candle satisfies
	\[
		low\leq\min\set{open,close}\leq\max\set{open,close}\leq high.
	\]
	Due to lack of information caused by the candle data, some general assumptions have to be made.

	\begin{assumptions}[see {\cite[Section 2.1]{MP2014}}]~\label{assum}
	\begin{itemize}
		\item No intra-period gaps: We assume a continuous intra-period price development during a candle.
		\item Market liquidity: All orders are filled at the requested price and thus without slippage.
		\item Local worst case/best case: Worst case and best case decisions are determined locally, i.e. the profit of a trade is evaluated as if it were closed on the close of the candle.
 		\item Local decision: The backtest engine decides only by considering the candle in question.
			Previous and future candles do not influence the decision made.
	\end{itemize}
	\end{assumptions}

	\begin{remark}
		Some of these assumptions are quite unreasonable for realistic price charts and markets.
		However, since the backtest engine has only candle data at hand, these assumptions are necessary and reasonable to decide a possible outcome of these market situations.
	\end{remark}

	To ease the understanding of the concepts presented, we will use an example throughout this work.

	\begin{sample}\label{bsp0}
		At the beginning of a candle, we might have no open position but two active orders, e.g. a stop buy entry order at level $53$, which enters a long position if the price reaches $53$, and a protective stop loss order at level $51$, which exits the long position (but only if it is opened beforehand) if the price falls below $51$.
		Given an arbitrary candle, the backtest engine should now decide which of these orders is executed and whether or not there are open positions at the end of the candle.
	\end{sample}

	\subsection{The setup and intra period prices}\label{subsec:setup_ipf}

	We can formalize which orders are active and what the position is prior to the candle.
	\begin{definition}
		Suppose we have $m\in\IN$ orders at levels $L_1<\ldots<L_m$, respectively, with $L_i\in\IR$ rounded to tick size of the asset for all $1\leq i\leq m$. Let $p\in\set{-1,0,1}$ denote the position status which was present prior to the candle in question, where $0$ stands for a flat position, $-1$ for a short position and $1$ for a long position.
		A combination of such $m$ orders and the position type $p$ is called a \emph{setup}.
	\end{definition}

	In the following we only consider setups which allow for at most one entry execution and one exit execution.
	The available entry orders are limit buy/sell and stop buy/sell, i.e. ``EnterLongLimit'', ``EnterShortLimit'', ``EnterLongStop'' and ``EnterShortStop''.
	As exit orders we use stop loss and profit target (cf. \cite[Section 2.2, 2.3 and 2.5]{MP2014} for the relevant decision trees).
	Note that we neglect market orders for entry and exit since they are trivially decidable.
	We do not investigate whether the results presented can be generalized to setups which allow for at most $a$ entry and $a$ exit executions, where $1<a\in\IN$, but analogous concepts might be applicable.
	Furthermore, we exclude the trivial case $m=0$ in which no order is active.

	Now, we can formalize Example~\ref{bsp0}.
	\begin{sample}\label{bsp1}
		We use the combination of a stop buy entry (EnterLongStop) order at level $L_2=53$ and a protective stop loss exit order at level $L_1=51$ attached to that entry order with $p=0$ as an example setup.
	\end{sample}

	For technical reasons only, the intra-period prices of a candle will be assumed to be continuous as follows, where we denote the set of all continuous functions $g:A\rightarrow B$ by $\mc{C}(A,B)$ and define $\IR^+:=\{x\in\IR \mid x\geq 0\}$.
	\begin{definition}
		We call $f\in \mc{C}([a,b],\IR^+)$ with $a,b\in\IR$ and $a<b$ an \emph{intra-period price function (IPF)}.
		By
		\[
			C(f):=\left(f(a),f(b),\max_{t\in[a,b]}f(t),\min_{t\in[a,b]}f(t)\right)\in(\IR^+)^4
		\]
		we denote the corresponding candle of the IPF $f$.
	\end{definition}
	Here, $a$ and $b$ are interpreted as the starting time and ending time of a period which is summed up to a candle and $f(t)$ as the price at time $t$ for $a\leq t\leq b$. Obviously, for each candle $c\in(\IR^+)^4$ one can construct an IFP $f$ such that $C(f)=c$.

	\subsection{Backtest engines and results}\label{subsec:results}

	Given an intra-period price function, we can define its result.
	\begin{definition}\label{def:CR}
		We define the \emph{result} of an IPF $f$ with respect to a given setup as the combination of~$(entry_f, exit_f)$, where $entry$ and $exit$ result from the correct application of the orders of the given setup.
		If $entry$ or $exit$ do not occur, they are denoted by value $-1$, respectively.
		We denote the result of the IPF $f$ by $R(f)$.
		Formally, for a given setup, this can be regarded as function
		\begin{equation}\label{eq:result_function}
			R:\bigcup_{a,b\in\IR,a<b}\mc{C}([a,b],\IR^+)\to(\IR^+\cup\set{-1})^2.
		\end{equation}

		With this, the \emph{combined candle data and result (CR)} of $f$ with respect to a given setup can be defined as the combination $CR(f):=(C(f),R(f))$.
	\end{definition}
	\begin{remarks}\label{rem:multiple_results}
		Note that results and CRs are defined depending on IPFs, which means that all CRs correspond to intra-period prices which fulfill the first two assumptions made in Assumptions~\ref{assum}.

		For given setup and candle data $(open,close,high,low)$, finding all possible results $(entry,\linebreak[0] exit)$ can be reformulated as finding $\im(CR)\cap\left(\set{(open,close,high,low)}\times\IR^2\right)$.
		Note that this set can have more than just one element, which means that the result for this candle is not unique.
	\end{remarks}

	For a given fixed setup, we can formalize the concept of a backtest engine used in this work.
	As we want to examine the behavior of a backtest engine on candle data, we can take such a candle as input data.
	The backtest engine should return a possible entry price at which a position may be opened and a possible exit price at which the position may be closed.
	Of course, entry or exit might not occur or might not have a price which can be uniquely determined (cf. Remark~\ref{rem:multiple_results}).
	Thus, the backtest engine also needs a ``backtest decision mode''~(short: backtest mode) which makes the entry and exit price unique.
	\begin{definition}\label{engine}
		For $BM:=\set{best~case,worst~case,ignore}$ we interpret a value in $(\IR^+)^4\times BM$ as the combination of the candle data $(open,close,high,low)\in(\IR^+)^4$ and the backtest mode according to Assumptions~\ref{assum}.

		For a given setup, we call a mapping
		\[
			E:(\IR^+)^4\times BM\to(\IR^+\cup\set{-1})^2
		\]
		a \emph{backtest engine}.

		The result $E(c_0,\mf{M})\in(\IR^+\cup\set{-1})^2$ for a candle $c_0\in(\IR^+)^4$ and a backtest mode $\mf{M}\in BM$ is interpreted as the combination of entry price and exit price, respectively, where again $-1$ denotes no entry/no exit.
	\end{definition}

	\begin{remark}
		On a particular chart, here a candle chart, the backtest engine of a platform decides the outcomes of a given strategy.
		It has to simulate scripts which are typically executed at the end of each candle and place orders.
		Furthermore, it has to keep track of placed orders and open positions.
		When all orders placed before a candle, which form the setup, are known, the backtest engine has to decide which of these orders are executed.
		Here, we consider only this last part of a backtest engine.
		So for us, a backtest engine $E$ has only the candle data (and the setup) as input, whereas the result function $R$ from \eqref{eq:result_function} requires an IPF, i.e. intra-period data.
		Thus, $E$ has to find results which correspond to the given candle, i.e. for a given candle $c_0$ the backtest engine has to find a result $r_0\in(\IR^+\cup\set{-1})^2$ s.t. $r_0=R(f)$ and $c_0=C(f)$ for an appropriate IPF $f$.
		The backtest mode specifies which IPF has to be chosen if there are several results possible corresponding to the given candle.
	\end{remark}

	Continuing Example~\ref{bsp1}, we show these concepts in application.
	\begin{sample}\label{bsp2}
		Using the setup from Example~\ref{bsp1} (stop buy entry order at $L_2=53$, stop loss exit order at $L_1=51$ with a flat position before the candle, i.e. $p=0$) and the IPF
		\[
			f:\left[0,\frac{5}{2}\pi\right]\to\IR^+, t\mapsto\sin(t)+52,
		\]
		(see Figure~\ref{figSinA}) we obtain the CR
		\[
			(C(f),R(f))=(open=52, close=53, high=53, low=51, entry_f=53, exit_f=51).
		\]

		In this example, we can see the non-uniqueness of results, if only a candle is given, by considering the IPF
		\[
			g:\left[0,\frac{3}{2}\pi\right]\to\IR^+, t\mapsto-\sin(t)+52,
		\]
		(see Figure~\ref{figSinB}) with the same resulting candle as for $f$, i.e. $C(f)=C(g)$, but the result
		\[
			R(g)=(entry_g=53, exit_g=-1)\neq R(f).
		\]
		Here, $g$ corresponds to the best case for this candle and $f$ to the worst case, as with a fictional $close$ of the candle we have for the cash value of $f$ and $g$ that
		\[
			close-entry_g=close-53=53-53=0>-2=51-53=exit_f-entry_f.
		\]
		At this point, we actually do not know whether there could be other cases, but for this setup there are at most $2$ possible results for a given candle. Choosing a backtest mode $\mf{M}\in BM$ then makes the decision for the backtest engine unique and allows to either choose the result from $f$, or from $g$ or also to ignore this trade.
	\end{sample}
	\begin{figure}

	  \subfloat[IPF $f$.]{
            \begin{tikzpicture}
                    \draw[color=black!0] (-3pt,-1) -- (-3pt,-1) node[anchor=east,color=black] {$51$}; 
                    \draw[color=black!0] (-3pt,1) -- (-3pt,1) node[anchor=east,color=black] {$53$}; 

                    \draw[color=black] (0,-1) -- (0,-1.3) node[anchor=north,color=black] {$0$}; 
                    \draw[color=black] (7.85,-1) -- (7.85,-1.3) node[anchor=north,color=black] {$\frac{5}{2}\pi$}; 

                    \draw[->,color=black!100] (0,-1.3) -- coordinate (y axis mid) (0,1.3); 
                    \draw[-,color=black!40] (0,1) -- coordinate (x axis mid) (8.3,1); 
                    \draw[-,color=black!40] (0,-1) -- coordinate (x axis mid) (8.3,-1); 

                    \draw[color=black!0] (6.5,0) -- (6.5,0) node[anchor=west,color=black] {$f$}; 

                    \draw[color=black,domain=0:7.85,samples=100]   plot (\x,{sin(\x r)})   node[right]{};
                    \candle{8.1}{0}{1}{1}{-1};
            \end{tikzpicture}
            \label{figSinA}
          }
          \hfill
          \subfloat[IPF $g$.]{
            \begin{tikzpicture}
                    \draw[color=black!0] (-3pt,-1) -- (-3pt,-1) node[anchor=east,color=black] {$51$}; 
                    \draw[color=black!0] (-3pt,1) -- (-3pt,1) node[anchor=east,color=black] {$53$}; 

                    \draw[color=black] (0,-1) -- (0,-1.3) node[anchor=north,color=black] {$0$}; 
                    \draw[color=black] (4.71,-1) -- (4.71,-1.3) node[anchor=north,color=black] {$\frac{3}{2}\pi$}; 

                    \draw[->,color=black!100] (0,-1.3) -- coordinate (y axis mid) (0,1.3); 
                    \draw[-,color=black!40] (0,1) -- coordinate (x axis mid) (5.2,1); 
                    \draw[-,color=black!40] (0,-1) -- coordinate (x axis mid) (5.2,-1); 

                    \draw[color=black!0] (3.5,0) -- (3.5,0) node[anchor=west,color=black] {$g$}; 

                    \draw[color=black,domain=0:4.71,samples=100]   plot (\x,{-sin(\x r)})   node[right]{};
                    \candle{4.95}{0}{1}{1}{-1};
            \end{tikzpicture}
            \label{figSinB}
          }
          \caption{IPFs from Example~\ref{bsp2} with the resulting candles.}
          \label{figSin}
	\end{figure}
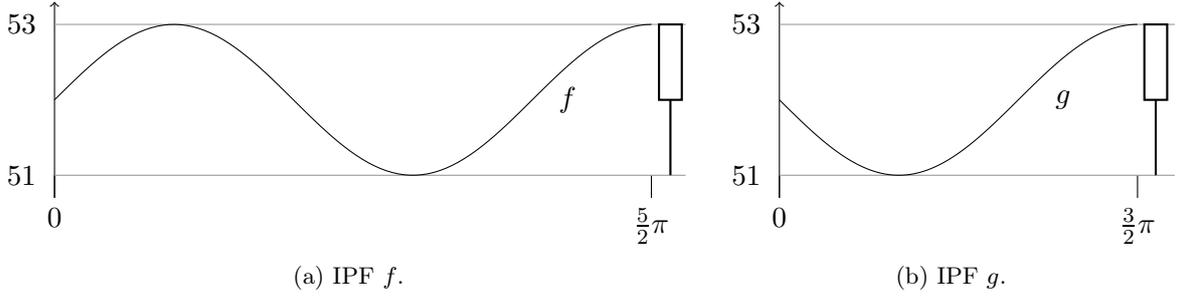

	Since we cannot check a backtest engine for each candle contained in the infinitely dimensional space $(\IR^+)^4$ we need to reduce the problem to finitely many model candles. However, for this step we need the backtest engine to be stable under transformation of candles which we discuss in the next section.


	\section{Suitable test cases for a proof of correctness}
	\label{sec:correctness}

	In this section we want to examine whether or not a given backtest engine works correctly.
	For this goal, we design ``model candles'', i.e. a set of finitely many candles which allow for a proof of correctness under the assumption of ``stability under transformations'', a concept which will be introduced as well.

	In general, the values of a CR do not coincide with the levels of the underlying setup, but we observe that possible values for entry and exit prices are the open of the candle and the levels of the setup.
	The exact values of open, close, high and low are not important for the resulting entry and exit, but rather their position relative to the levels $L_1,\ldots,L_m$.
	We therefore define $l_{2i-1}:=L_{i}$ for $i=1,\ldots,m$ and introduce intermediate levels $l_{2i}$ with $L_i<l_{2i}<L_{i+1}$ such that $l_0<l_1<\ldots<l_{2m}$.
	Restricting the candle data to these values results in a system of finitely many \emph{representative candles}.
	We observe that the number of representative candles is at most $(2m+1)^4$, as open, close, high and low can only take the $2m+1$ values $l_0,\ldots,l_{2m}$.

	\begin{sample}\label{bsp3}
		In Example~\ref{bsp1}, we may choose
		\begin{align}\label{eqLevels}
			l_0:=50<l_1=L_1=51<l_2:=52<l_3=L_2=53<l_4:=54.
		\end{align}
		Figure~\ref{repCandles} shows all representative candles for this setup.
	\end{sample}
	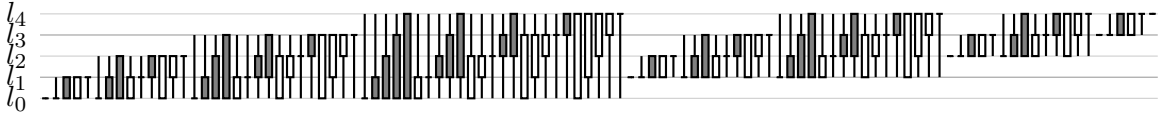
\begin{figure}
		\centering
		\begin{tikzpicture}[scale=0.28]
			\foreach \x in {0,...,4}{
				\draw[color=black!0] (-3pt,\x) -- (-3pt,\x) node[anchor=east,color=black] {$l_\x$}; 
			}
			\foreach \x in {1,...,2}{
				\draw[-,color=black!40] (0,2*\x-1) -- coordinate (x axis mid) (52.5,2*\x-1); 
			}
			\foreach \x in {0,...,2}{
				\draw[-,color=black!20] (0,2*\x) -- coordinate (x axis mid) (52.5,2*\x); 
			}
			\newcounter{x};
			\foreach \l in {0,...,4}{
			\foreach \h in {\l,...,4}{
			\foreach \c in {\l,...,\h}{
			\foreach \o in {\l,...,\h}{
				\candle{0.25+0.5*\value{x}}{\o}{\c}{\h}{\l};
				\addtocounter{x}{1};
			}
			}
			}
			}
		\end{tikzpicture}
		\caption{All 105 possible candles with values in $\set{l_0,\ldots,l_4}$ from Example~\ref{bsp3}.\label{repCandles}}
	\end{figure}

	In order to further restrict the levels $l_i$ to fixed values, we need to introduce a notion of transformation from one set of levels to another one.
	\begin{definition}
		A strictly monotonously increasing bijective function
		\[
			T: \IR^+\to\IR^+
		\]
		is in the following called a \emph{transformation}.
		In order to apply a transformation to results, we set $T(-1):=-1$ for entries and exits.
	\end{definition}

	A transformation can be used to transfer setups, candles, IPFs, results and CRs from one set of levels to another one.
	\begin{sample}\label{bsp4}
		Continuing Example~\ref{bsp3} with levels given in \eqref{eqLevels} and examining the corresponding CR $(52,51,53,51,53,51)$ of the IPF $f$ from Example~\ref{bsp2}, we might apply the transformation
		\[
			T: \IR^+\to\IR^+, t\mapsto 2t+1.
		\]
		This results in a setup with the same orders but at levels
		\[
			\tilde{l}_0:=101<\tilde{l}_1=\tilde{L}_1=103<\tilde{l}_2:=105<\tilde{l}_3=\tilde{L}_2=107<\tilde{l}_4:=109
		\]
		and a transformation of the CR given by
		$
			(105, 103, 107, 103, 107, 103)\in\IR^6.
		$

		Apparently, this tuple itself is a CR, i.e. it is derived from an IPF.
		This can be shown by using the IPF $f$ from Example~\ref{bsp2} and transforming it to $T\circ f$.
	\end{sample}

	Now we introduce the main assumption about the given backtest engine, which will allow us to prove correctness of that backtest engine.
	\begin{definition}\label{def:stableUT}
		We call a backtest engine $E$ \emph{stable under transformations}, if the following holds true for any given IPF $f$, transformation $T$ and setup at levels $L_1,\ldots,L_m$:\\
		If the result of the backtest engine $E$ with backtest mode $\mf{M}\in BM$ is given by
		\[
			E(C(f),\mf{M})=(entry,exit),
		\]
		then the result of the backtest engine after transformation for the new setup at levels $T(L_1),\ldots,$ $T(L_m)$ and the new IPF given by $T\circ f$ yields
		\[
			E(C(T\circ f),\mf{M})=(T(entry),T(exit))=:T(entry,exit).
		\]
	\end{definition}

	\begin{remark}\label{CRStable}
		For a backtest engine, it is desirable to be stable under transformations.
		This is due to the structure of orders which are independent of the exact levels but depend only on their relative values.
		The decision trees in \cite{MP2014}, which outline how correct results for several order setups can be obtained, are also independent of the exact levels and only depend on relative values.
		Therefore the decision tree and thus also the corresponding results are naturally stable under transformations.
		We conclude that results only depend on relative positions of orders, or, formally, for all IPFs $f$ we have $R(T\circ f)=T(R(f))$.

		Additionally, by the monotonicity of transformations, best cases remain best cases and worst cases remain worst cases under transformations, which can again be seen in the decision trees in \cite{MP2014}.
	\end{remark}

	From representative candles with values in $\set{l_0,\ldots,l_{2m}}$, we can now cover almost all possible candles by applying transformations, but some important cases are missing:
	The ``generic''~candle has a range $[low,high]$ that does not include any level from $\set{L_1,\ldots,L_m}$ but nonetheless the values for open, close, high and low may differ from each other (see Figure~\ref{figBlowUpA} for examples of such candles).
	These cases are not covered by the considerations made so far, but can easily be taken into account by the following considerations.
	Combining our observations, we arrive at the set of ``model'' candles we want to examine by introducing four intermediate levels between $L_{i-1}$ and $L_i$.

	\begin{definition}\label{modelBar}
		Let a setup at levels $L_1,\ldots,L_m$ be given.
		Furthermore, choose values $l_{2i},l_{2i-1},$ $l_{2i,j}\in\IR^+$ for $i=1,\ldots,m$ and $j=1,\ldots,4$ with
		\begin{gather*}
			l_{0}=l_{0,1}<l_{0,2}<l_{0,3}<l_{0,4}<L_0,\\
			l_{2i-1} := L_i<l_{2i}=l_{2i,1}<l_{2i,2}<l_{2i,3}<l_{2i,4}<l_{2i+1}:=L_{i+1}\quad\text{ for }1\leq i\leq m-1,\\
			L_m<l_{2m}=l_{2m,1}<l_{2m,2}<l_{2m,3}<l_{2m,4}.
		\end{gather*}
		A \emph{model candle} is a candle with values
		\[
			open,close,high,low\in \set{l_i\mid 0\leq i\leq 2m}\cup\set{l_{2i,j}\mid 0\leq i\leq m,1\leq j\leq 4}
		\]
		with the further restriction
		\begin{multline}\label{eq:sublevel_condition}
			\forall i\in\set{0,\ldots,m}\exists r_i\in\set{1,\ldots,4}:
			\text{ if } \set{open,close,high,low}\cap\set{l_{2i,j}\mid1\leq j\leq 4}\neq\emptyset \\
			\text{then }\set{open,close,high,low}\cap\set{l_{2i,j}\mid1\leq j\leq 4}=\set{l_{2i,j}\mid1\leq j\leq r_i}.
		\end{multline}
	\end{definition}

	Figure~\ref{figBlowUp} shows examples of some candles, where in this extremal case we have up to $12$ model candles for one of the representative candles introduced.

	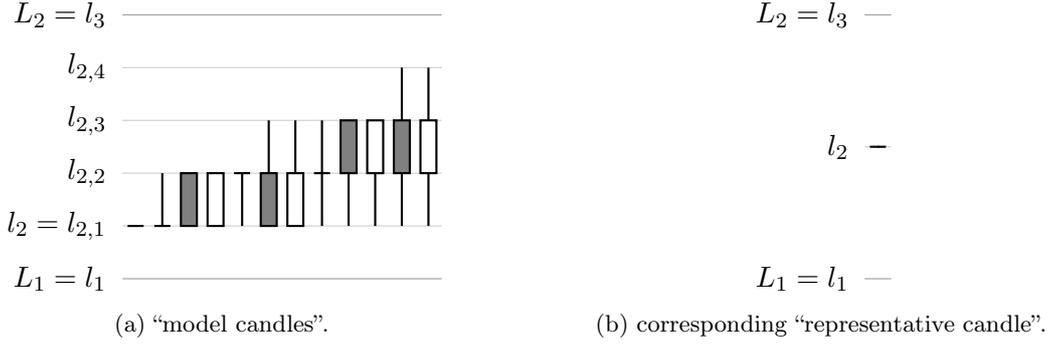
\begin{figure}
	\centering
		\subfloat[``model candles''.]{
		\begin{minipage}{7cm}
		\centering
		\begin{tikzpicture}[scale=0.7]
			\draw[-,color=black!40] (0,0) -- coordinate (x axis mid) (6,0); 
			\draw[color=black!0] (-3pt,0) -- (-3pt,0) node[anchor=east,color=black] {$L_1=l_1$}; 

			\draw[-,color=black!20] (0,1) -- coordinate (x axis mid) (6,1); 
			\draw[color=black!0] (-3pt,1) -- (-3pt,1) node[anchor=east,color=black] {$l_2=l_{2,1}$}; 
			\draw[-,color=black!20] (0,2) -- coordinate (x axis mid) (6,2); 
			\draw[color=black!0] (-3pt,2) -- (-3pt,2) node[anchor=east,color=black] {$l_{2,2}$}; 
			\draw[-,color=black!20] (0,3) -- coordinate (x axis mid) (6,3); 
			\draw[color=black!0] (-3pt,3) -- (-3pt,3) node[anchor=east,color=black] {$l_{2,3}$}; 
			\draw[-,color=black!20] (0,4) -- coordinate (x axis mid) (6,4); 
			\draw[color=black!0] (-3pt,4) -- (-3pt,4) node[anchor=east,color=black] {$l_{2,4}$}; 

			\draw[-,color=black!40] (0,5) -- coordinate (x axis mid) (6,5); 
			\draw[color=black!0] (-3pt,5) -- (-3pt,5) node[anchor=east,color=black] {$L_2=l_3$}; 

			\candle{0.25}{1}{1}{1}{1};
			\candle{0.75}{1}{1}{2}{1};
			\candle{1.25}{2}{1}{2}{1};
			\candle{1.75}{1}{2}{2}{1};
			\candle{2.25}{2}{2}{2}{1};
			\candle{2.75}{2}{1}{3}{1};
			\candle{3.25}{1}{2}{3}{1};
			\candle{3.75}{2}{2}{3}{1};
			\candle{4.25}{3}{2}{3}{1};
			\candle{4.75}{2}{3}{3}{1};
			\candle{5.25}{3}{2}{4}{1};
			\candle{5.75}{2}{3}{4}{1};
		\end{tikzpicture}
		\end{minipage}
		\label{figBlowUpA}
		}
		\hspace{5mm}
		\subfloat[corresponding ``representative candle''.]{
		\begin{minipage}{7cm}
		\centering
		\begin{tikzpicture}[scale=0.7]
			\draw[-,color=black!40] (0,0) -- coordinate (x axis mid) (0.5,0); 
			\draw[color=black!0] (-3pt,0) -- (-3pt,0) node[anchor=east,color=black] {$L_1=l_1$}; 

			\draw[-,color=black!20] (0,2.5) -- coordinate (x axis mid) (0.5,2.5); 
			\draw[color=black!0] (-3pt,2.5) -- (-3pt,2.5) node[anchor=east,color=black] {$l_2$}; 

			\draw[-,color=black!40] (0,5) -- coordinate (x axis mid) (0.5,5); 
			\draw[color=black!0] (-3pt,5) -- (-3pt,5) node[anchor=east,color=black] {$L_2=l_3$}; 

			\candle{0.25}{2.5}{2.5}{2.5}{2.5};
		\end{tikzpicture}
		\end{minipage}
		\label{figBlowUpB}
		}

		\caption{An example of candles with values in $\set{l_{2,j}\mid j=1,\ldots,4}$ (see Definition~\ref{modelBar}).}
		\label{figBlowUp}
	\end{figure}

	In this way, we obtain sufficiently many candles to cover the ``generic''~candles described above as well.

	\begin{remarks}
		Condition \eqref{eq:sublevel_condition} for the sublevels $l_{2i,j}$ is introduced to minimize the use of these additional levels by, figuratively speaking, filling up the levels $l_{2i,j}$ for a fixed $i$ by increasing $j$ as far as necessary.
		This reduces the total number of model candles.

		For an application of the construction, all values can be chosen equidistant.
		It is important to consider the tick size used and especially to not choose the sublevels $l_{2i,j}$ with too little distance.
		For example if we have a tick size of $0.01$ we could choose $l_i:=50.05+i$ and $l_{2i,j}:=49.95+2i+0.1j$ for $i=0,\ldots,m,j=1,\ldots,4$, which allows us to also check for rounding errors.
	\end{remarks}

	With all these preparations, we are now able to conduct a proof of correctness for given backtest engine and setup.
	\begin{thm}\label{thm:correct_engine}
		Let a backtest engine $E$ be given which is stable under transformations for a setup at (ordered) levels $L_1<\ldots<L_m$.
		If the backtest engine works correctly (e.g. for the criteria given in \cite{MP2014}) on the set of all model candles of this setup with fixed $l_i,l_{2i,j}$ as in Definition~\ref{modelBar}, then it works correctly on this setup with any (ordered) levels $\tilde{L}_1<\ldots<\tilde{L}_m$ (and arbitrary candles).
	\end{thm}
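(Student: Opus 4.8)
The plan is to reduce every instance to one of the finitely many model candles by means of a transformation, exploiting stability under transformations together with Remark~\ref{CRStable}. Fix arbitrary ordered levels $\tilde L_1<\ldots<\tilde L_m$, an arbitrary candle $\tilde c\in(\IR^+)^4$ whose four coordinates I continue to denote $open,close,high,low$, and a backtest mode $\mf M\in BM$; I must show that $E(\tilde c,\mf M)$ equals the result prescribed in \cite{MP2014}. The guiding principle, made explicit in Remark~\ref{CRStable}, is that the correct result depends only on the relative position of $open,close,high,low$ among the levels $\tilde L_1,\ldots,\tilde L_m$ (and among each other), not on the actual numbers. I therefore encode this relative position as the \emph{combinatorial type} of $\tilde c$: the order pattern of $\set{open,close,high,low}\cup\set{\tilde L_1,\ldots,\tilde L_m}$ together with the labelling of which value is which, subject to $low\le\min\set{open,close}\le\max\set{open,close}\le high$ and $\tilde L_1<\ldots<\tilde L_m$.

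The first substantial step is to produce, on the \emph{fixed} levels $l_i,l_{2i,j}$ of Definition~\ref{modelBar}, a model candle $c_0$ of exactly the same combinatorial type as $\tilde c$. A value of $\tilde c$ coinciding with a level $\tilde L_i$ is assigned to $l_{2i-1}$; for the remaining values lying strictly inside a gap $(\tilde L_i,\tilde L_{i+1})$ (or below $\tilde L_1$, or above $\tilde L_m$) I list the distinct values occurring in that gap in increasing order and map the $k$-th of them to the sublevel $l_{2i,k}$. Because a candle has only four entries, at most four distinct values can fall into a single gap, so the four sublevels $l_{2i,1}<\ldots<l_{2i,4}$ always suffice, and filling them from the bottom is precisely the normalisation demanded by condition~\eqref{eq:sublevel_condition}. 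The intrinsic ordering $low\le\min\set{open,close}\le\max\set{open,close}\le high$ is preserved, so $c_0$ is a genuine model candle of the same type as $\tilde c$.

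Next I would build a transformation $T\colon\IR^+\to\IR^+$ with $T(L_i)=\tilde L_i$ for $1\le i\le m$ and sending each value of $c_0$ to the corresponding value of $\tilde c$. Since $c_0$ and $\tilde c$ share the same combinatorial type, these finitely many prescribed correspondences (including all coincidences of coordinates with each other and with levels) are strictly order-preserving, hence extend to a strictly increasing bijection of $\IR^+$, for instance the piecewise-linear interpolation through the prescribed nodes, suitably continued near $0$ and $+\infty$. As $T$ is increasing it commutes with $\max$ and $\min$, so for any IPF $f_0$ with $C(f_0)=c_0$ one has $C(T\circ f_0)=\tilde c$. Stability under transformations (Definition~\ref{def:stableUT}) then yields
\[
	E(\tilde c,\mf M)=E\bigl(C(T\circ f_0),\mf M\bigr)=T\bigl(E(c_0,\mf M)\bigr).
\]
By hypothesis $E$ is correct on the model candle $c_0$, i.e.\ $E(c_0,\mf M)$ is the result selected by $\mf M$ from the decision trees of \cite{MP2014}; invoking $R(T\circ f_0)=T(R(f_0))$ and the preservation of best/worst cases from Remark~\ref{CRStable}, the right-hand side is exactly the result those trees prescribe for $\tilde c$ under $\mf M$. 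Hence $E(\tilde c,\mf M)$ is correct, and as $\tilde c$, the levels and $\mf M$ were arbitrary, the theorem follows.

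The main obstacle is the covering step of the second paragraph: one must verify that the model candles of Definition~\ref{modelBar} realise \emph{every} combinatorial type an arbitrary candle can exhibit, and that restriction~\eqref{eq:sublevel_condition} never excludes a needed configuration but merely fixes a canonical representative. The count ``at most four distinct values per gap'' together with the candle constraint is what makes four sublevels simultaneously necessary and sufficient; carefully running the case distinction (the boundary gaps below $\tilde L_1$ and above $\tilde L_m$, and all ties among $open,close,high,low$) is where the real work lies. Once this is in place, the construction of $T$ and the concluding identity are routine consequences of monotonicity and stability under transformations.
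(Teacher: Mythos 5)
Your proposal is correct and follows essentially the same route as the paper: both proofs match the arbitrary candle with a model candle via a piecewise-linear, strictly increasing transformation that carries the levels $L_i,l_i,l_{2i,j}$ onto the arbitrary configuration, and then conclude with stability under transformations together with Remark~\ref{CRStable}. The only cosmetic difference is direction of presentation --- the paper adapts auxiliary sublevels $\tilde{l}_i,\tilde{l}_{2i,j}$ to the given candle and maps it onto a model candle, then pulls back with $T\inv$, whereas you build the model candle of matching ``combinatorial type'' first and push forward; your covering argument (at most four distinct values per gap, filled from the bottom to satisfy condition~\eqref{eq:sublevel_condition}) is exactly the observation the paper invokes when it asserts such a choice ``is obviously always possible.''
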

	\begin{proof}
		Let an arbitrary candle $c=(open,close,high,low)\in(\IR^+)^4$, ordered levels $\tilde{L}_1,\ldots,\tilde{L}_m\in\IR^+$ and a backtest mode $\mf{M}$ be given.
		Set $(\widetilde{entry},\widetilde{exit}):=E(c,\mf{M})$.

		We have to show that $(\widetilde{entry},\widetilde{exit})$ is a correct result for the candle $c$, i.e. that there exists some IPF $\tilde{f}$ with $c=C(\tilde{f})$ and $R(\tilde{f})=(\widetilde{entry},\widetilde{exit})$, or, if $\mf{M}=ignore$ and there is no unique result, that $(\widetilde{entry},\widetilde{exit})=(-1,-1)$.

		In order to do this let us define $\tilde{l}_{2i-1}:=\tilde{L}_i$ for $1\leq i\leq m$ and $\tilde{l}_{2i,j},\tilde{l}_i\in\IR^+$ such that the requirements from Definition~\ref{modelBar} are fulfilled and $open,close,high,low\in\set{\tilde{l}_i\mid 0\leq i\leq 2m}\cup\set{\tilde{l}_{2i,j}\mid 0\leq i\leq m,1\leq j\leq 4}$ which fulfill the minimal sublevel condition~\eqref{eq:sublevel_condition} for the level $\tilde{l}_{2i,j}$ instead of $l_{2i,j}$.
		Such a choice is obviously always possible.

		Now we construct a transformation $T$ that transforms the setup at levels $\tilde{L}_i$ to the setup at levels $L_i$.
		This can be done by setting $T(0)=0$, $T(\tilde{l}_{2i-1})=l_{2i-1}$, $T(\tilde{l}_{2i,j})=l_{2i,j}$ for $i=0,\ldots, m$ and $j=1,\ldots,4$ and interpolating piecewise linearly in between these values.
		Formally,
		\[
			T:\IR^+\to\IR^+, t\mapsto
			\begin{cases}
				\frac{l_{0,1}}{\tilde{l}_{0,1}}t,&t<\tilde{l}_{0,1},\\
				l_{2i,j}+(t-\tilde{l}_{2i,j})\frac{l_{2i,j+1}-l_{2i,j}}{\tilde{l}_{2i,j+1}-\tilde{l}_{2i,j}},&\tilde{l}_{2i,j}\leq t<\tilde{l}_{2i,j+1},0\leq i\leq m,1\leq j\leq 3,\\
				l_i+(t-\tilde{l}_i)\frac{l_{i+1,1}-l_i}{\tilde{l}_{i+1,1}-\tilde{l}_i},&\tilde{l}_{i}\leq t<\tilde{l}_{i+1,1},i\text{ odd},\\
				l_{i-1,4}+(t-\tilde{l}_{i-1,4})\frac{l_i-l_{i-1,4}}{\tilde{l}_i-\tilde{l}_{i-1,4}},&\tilde{l}_{i-1,4}\leq t<\tilde{l}_{i},i\text{ odd},\\
				l_{2m,4}-\tilde{l}_{2m,4}+t,&t\geq\tilde{l}_{2m,4}.
			\end{cases}
		\]

		By construction, $T$ is continuous, strictly monotonously increasing and bijective.
		Furthermore, $T(c):=(T(open),T(close),T(high),T(low))$ is a model candle w.r.t. $L_i,l_i,l_{i,j}$ (cf. Figure~\ref{transformModelCandle}).

		\begin{figure}
		\centering
                \subfloat[Candle $c$ and levels $\tilde{L}_i,\tilde{l}_i,\tilde{l}_{i,j}$.]{
		\begin{minipage}{6cm}
		\centering
		\begin{tikzpicture}[scale=0.7]
			\draw[-,color=black!40] (0,0) -- coordinate (x axis mid) (1,0); 
			\draw[color=black!0] (-3pt,0) -- (-3pt,0) node[anchor=east,color=black] {$\tilde{L}_i=\tilde{l}_{2i-1}$}; 

			\draw[-,color=black!20] (0,1.5) -- coordinate (x axis mid) (1,1.5); 
			\draw[color=black!0] (-3pt,1.5) -- (-3pt,1.5) node[anchor=east,color=black] {$\tilde{l}_{2i}=\tilde{l}_{2i,1}$}; 
			\draw[-,color=black!20] (0,2) -- coordinate (x axis mid) (1,2); 
			\draw[color=black!0] (1,2) -- (1,2) node[anchor=west,color=black] {$\tilde{l}_{2i,2}$}; 
			\draw[-,color=black!20] (0,4) -- coordinate (x axis mid) (1,4); 
			\draw[color=black!0] (-3pt,4) -- (-3pt,4) node[anchor=east,color=black] {$\tilde{l}_{2i,3}$}; 
			\draw[-,color=black!20] (0,4.8) -- coordinate (x axis mid) (1,4.8); 
			\draw[color=black!0] (1,4.8) -- (1,4.8) node[anchor=west,color=black] {$\tilde{l}_{2i,4}$}; 

			\draw[-,color=black!40] (0,5) -- coordinate (x axis mid) (1,5); 
			\draw[color=black!0] (-3pt,5) -- (-3pt,5) node[anchor=east,color=black] {$\tilde{L}_{i+1}=\tilde{l}_{2i+1}$}; 

			\candle{0.5}{1.5}{2}{4}{0};
		\end{tikzpicture}
		\end{minipage}%
		\label{transformModelCandleA}
		}
		\hspace{5mm}
                \subfloat[Candle $T(c)$, a model candle w.r.t. $L_i,l_i,l_{i,j}$.]{
		\begin{minipage}{8cm}
		\centering
		\begin{tikzpicture}[scale=0.7]
			\draw[-,color=black!40] (0,0) -- coordinate (x axis mid) (1,0); 
			\draw[color=black!0] (-3pt,0) -- (-3pt,0) node[anchor=east,color=black] {$L_i=l_{2i-1}$}; 

			\draw[-,color=black!20] (0,1) -- coordinate (x axis mid) (1,1); 
			\draw[color=black!0] (-3pt,1) -- (-3pt,1) node[anchor=east,color=black] {$l_{2i}=l_{2i,1}$}; 
			\draw[-,color=black!20] (0,2) -- coordinate (x axis mid) (1,2); 
			\draw[color=black!0] (-3pt,2) -- (-3pt,2) node[anchor=east,color=black] {$l_{2i,2}$}; 
			\draw[-,color=black!20] (0,3) -- coordinate (x axis mid) (1,3); 
			\draw[color=black!0] (-3pt,3) -- (-3pt,3) node[anchor=east,color=black] {$l_{2i,3}$}; 
			\draw[-,color=black!20] (0,4) -- coordinate (x axis mid) (1,4); 
			\draw[color=black!0] (-3pt,4) -- (-3pt,4) node[anchor=east,color=black] {$l_{2i,4}$}; 

			\draw[-,color=black!40] (0,5) -- coordinate (x axis mid) (1,5); 
			\draw[color=black!0] (-3pt,5) -- (-3pt,5) node[anchor=east,color=black] {$L_{i+1}=l_{2i+1}$}; 

			\candle{0.5}{1}{2}{3}{0};
		\end{tikzpicture}
		\end{minipage}
		\label{transformModelCandleB}
		}

		\caption{An example of the application of the transformation $T$ on an arbitrary candle.}
		\label{transformModelCandle}
		\end{figure}
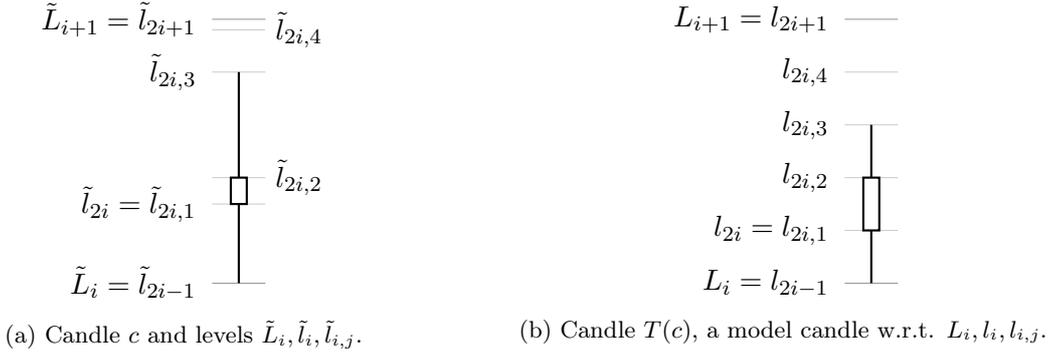

		By assumption, $E$ works correctly on model candles w.r.t. $L_i,l_i,l_{i,j}$. Therefore, we know that $E(T(c),\mf{M})$ is the desired result of the model candle $T(c)$ and $(T(c),E(T(c),\mf{M}))$ the desired CR.

		If $\mf{M}=ignore$ and there is no unique result for $T(c)$, this still holds true after applying the inverse transformation $T\inv$, so we have $(-1,-1)=E(T(c),\mf{M})=E(c,\mf{M})$ by stability under transformations, which is the correct result by Remark \ref{CRStable}.

		Otherwise, i.e. $\mf{M}\neq ignore$ or the result for $T(c)$ is unique, there exists an IPF $f=f_{T(c),\mf{M}}$ (depending on $T(c)$ and $\mf{M}$ only) such that $C(f)=T(c)$ and $R(f)=E(T(c),\mf{M})$, because each result corresponds to some IPF.
		Then clearly $C(T\inv\circ f)=c$.
		Therefore setting $\tilde{f}:=T\inv\circ f$, we already have $C(\tilde{f})=c$ as desired.
		Furthermore, applying the stability of $E$ under the transformation $T\inv$ (cf. Definition \ref{def:stableUT}), we obtain
		\[
			(\widetilde{entry},\widetilde{exit})\overset{\text{def.}}{=}E(c,\mf{M})=E(C(T\inv\circ f),\mf{M})\overset{\text{stability u.t. of }E}{=}T\inv(E(C(f),\mf{M})).
		\]
		Since, by construction, $C(f)=T(c)$ and $R(f)=E(T(c),\mf{M})$, we conclude from the stability under transformation of results (cf. Remark \ref{CRStable}) that
		\[
			(\widetilde{entry},\widetilde{exit})=T\inv(E(T(c),\mf{M}))=T\inv(R(f))=R(\tilde{f}),
		\]
		which is, again by Remark \ref{CRStable}, the correct result.

		As $c$ and $\tilde{L}_1,\ldots,\tilde{L}_m$ were chosen arbitrarily, we conclude general correctness of $E$ for this setup.
	\end{proof}

	Theorem~\ref{thm:correct_engine} allows us to reduce the verification of a stable backtest engine to simply check the backtest evaluation on one set of model candles. Therefore this set is focused in the next section.


	\section{A model for intra-period prices}
	\label{sec:IPFs}

	Our goal in this section is to develop a model for IPFs which allows for only finitely many model price functions but still results in all possible CRs.
	The results for all CRs are needed to test a given backtest engine by comparing its results with correct results.
	They can be determined by the means of this section with the help of a correct reference backtest engine on the finitely many model price functions.

	We can restrict ourselves to model candles which do not take the values $l_{2i,2},l_{2i,3},l_{2i,4}$, but only $l_{2i}=l_{2i,1}$ for all $1\leq i\leq m$, because, as we see below, we can reconstruct the results for the model candles with values in $l_{2i,2},l_{2i,3},l_{2i,4}$.
	\begin{remark}\label{restrictRepCandle}
		Let a setup and a backtest mode $\mf{M}$ be given.

		Furthermore, let $c=(open,close,high,low)$ be a model candle w.r.t. the levels
		\[
			\mc{L}:=\set{l_i\mid 0\leq i\leq 2m}\cup\set{l_{2i,j}\mid 0\leq i\leq m,1\leq j\leq 4},
		\]
		i.e. $c\in \mc{L}^4$.
		We can obtain the result for $c$ by first applying the substitution
		\[
			l_{2i,j}\mapsto l_{2i}\text{ for all }0\leq i\leq m,\,1\leq j\leq 4
		\]
		to $c$ which leads to a new candle $c'$. Calculating the result $r'$ of $c'$ and then ``re-substituting''
		\[
			l_{2i_0}\mapsto l_{2i_0,j_0}\text{ in case }l_{2i_0,j_0}=open\text{ of the candle }c
		\]
		in $r'$ gives us the result $r$.
		Indeed, $r$ is the desired result for $c$.

		This procedure is possible due to the fact that entry and exit can only occur at levels $L_i$ or at the open value.
	\end{remark}

	\begin{sample}
		Remark~\ref{restrictRepCandle} can be illustrated for our setup from Example~\ref{bsp1} (EnterLongStop at $L_2$, stop loss at $L_1$) with the model candle
		\[
			c:=(open=l_{4,2},close=l_{4,1}=l_4,high=l_{4,3},low=l_3=L_2),
		\]
		which has the unique result $r:=(entry=open=l_{4,2},exit=-1)$.
		By substituting $l_{4,j}$ with $l_4$ for all $1\leq j\leq 4$, we get the representative candle
		\[
			c':=(open'=l_{4},close'=l_4,high'=l_{4},low'=l_3=L_2)
		\]
		with the unique result $r':=(entry'=open'=l_{4},exit'=-1)$.
		We can reconstruct $r$ from $r'$ by substituting $l_4=open'\mapsto open=l_{4,2}$ in $r'$.
	\end{sample}

	The following model for intra period prices allows only for the values $l_0,\ldots,l_{2m}$ at certain interpolation points.

	\begin{definition}
		An \emph{intra-period model price series} (IPMS) w.r.t. a given setup is a finite sequence $s:=(l_{i_1},\ldots,l_{i_k})$ with $0\leq i_j\leq 2m$ for all $1\leq j\leq k$ and $i_{j+1}=i_j+1$ or $i_{j+1}=i_j-1$ for all $1\leq j\leq k-1$.
		We call $|s|:=k$ the \emph{size} of the IPMS.
		By $\mc{M}:=\mc{M}_m$, we denote the set of all IPMS w.r.t. a number of levels $m$.
	\end{definition}

	The connection with intra-period price functions is the following:

	\begin{observation}
		Every IPMS $s=(l_{i_1},\ldots,l_{i_k})$ can be interpreted as an IPF by using piecewise linear interpolation as follows:
		Define $f^{(s)}:[1,k]\to\IR^+$ by
		\[
			f^{(s)}(t):=l_{i_{\lfloor t\rfloor}}+(t-\lfloor t\rfloor)(l_{i_{\lfloor t\rfloor+1}}-l_{i_{\lfloor t\rfloor}}),
			\quad\text{where }l_{i_{k+1}}:=l_{i_k},
		\]
		such that $f^{(s)}(j)=l_{i_j}$ for $j=1,\ldots,k$.
		Indeed, $f^{(s)}$ is continuous by construction.
	\end{observation}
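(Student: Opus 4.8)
The plan is to verify directly the properties that make $f^{(s)}$ an IPF in the sense of the earlier definition, namely that the defining formula is well-defined on all of $[1,k]$, that $f^{(s)}$ takes values in $\IR^+$, and that it is continuous, while recording along the way the interpolation identity $f^{(s)}(j)=l_{i_j}$. (Implicitly one assumes $k\geq 2$ so that $[1,k]$ is a nondegenerate interval, matching the requirement $a<b$ in the definition of an IPF.)

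First I would check well-definedness. For $t\in[1,k)$ we have $\lfloor t\rfloor\in\set{1,\ldots,k-1}$, so both indices $i_{\lfloor t\rfloor}$ and $i_{\lfloor t\rfloor+1}$ refer to genuine members of the sequence $s$; at the right endpoint $t=k$ we have $\lfloor t\rfloor=k$ and $t-\lfloor t\rfloor=0$, so the term carrying the otherwise undefined index $k+1$ is multiplied by $0$, and the convention $l_{i_{k+1}}:=l_{i_k}$ renders the expression unambiguous. The interpolation identity is then immediate: for an integer $j\in\set{1,\ldots,k}$ we have $\lfloor j\rfloor=j$ and $j-\lfloor j\rfloor=0$, whence $f^{(s)}(j)=l_{i_j}$.

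The key step is continuity. On each closed subinterval $[j,j+1]$ with $1\leq j\leq k-1$ the floor is constant equal to $j$, so there $f^{(s)}$ coincides with the affine map $t\mapsto l_{i_j}+(t-j)(l_{i_{j+1}}-l_{i_j})$, which is continuous. I would then glue these restrictions via the pasting lemma: at each interior integer node $j$ the value obtained from the lower piece (the left limit, $l_{i_{j-1}}+1\cdot(l_{i_j}-l_{i_{j-1}})=l_{i_j}$) agrees with the value obtained from the upper piece ($l_{i_j}$), so the one-sided limits coincide and $f^{(s)}$ is continuous on all of $[1,k]$. I would point out that this argument nowhere uses the step condition $i_{j+1}=i_j\pm 1$; it holds for any real sequence, so continuity is genuinely ``by construction'' as asserted.

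For the range, writing $\lambda:=t-\lfloor t\rfloor\in[0,1]$ exhibits $f^{(s)}(t)=(1-\lambda)\,l_{i_{\lfloor t\rfloor}}+\lambda\,l_{i_{\lfloor t\rfloor+1}}$ as a convex combination of two of the levels $l_i$, each of which lies in $\IR^+$ by Definition~\ref{modelBar}; hence $f^{(s)}(t)\geq 0$ and $f^{(s)}\in\mc{C}([1,k],\IR^+)$, as required. I do not anticipate a real obstacle here: the only point demanding any care is the continuity check at the integer nodes, and even that reduces to the trivial observation that the two adjacent affine pieces both attain the value $l_{i_j}$ at $t=j$.
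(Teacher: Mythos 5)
Your proof is correct and takes the same route as the paper: the Observation itself offers no argument beyond ``continuous by construction,'' and you simply supply the routine verification of that construction (well-definedness at $t=k$ via the convention $l_{i_{k+1}}:=l_{i_k}$, the interpolation identity, pasting of the affine pieces at integer nodes, and nonnegativity of the values). Your side remark that $k\geq 2$ is implicitly required for $[1,k]$ to be a nondegenerate interval, as demanded by the definition of an IPF, is a small but genuine point the paper glosses over.
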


	Through this connection, we define the result of an IPMS.

	\begin{definition}
		For a given setup with $m$ levels, we define
		\[
			\mc{R}:\mc{M}\to(\IR^+\cup\set{-1})^2, s\mapsto R(f^{(s)})
		\]
		and call $\mc{R}(s)$ the result of $s$.

		Similarly, by $\mc{C}(s):=C(f^{(s)})$ we denote the candle resulting from $s$.
	\end{definition}

	\begin{sample}\label{bsp5}
		In Example~\ref{bsp1}, an IPMS could be given by $s:=(l_2=52,l_3=53,l_2=52,l_1=51,l_2=52,l_3=53)$.
		The corresponding piecewise linear IPF $f^{(s)}$ in a graphical representation is given in Figure~\ref{figIPMS}.
	\end{sample}
	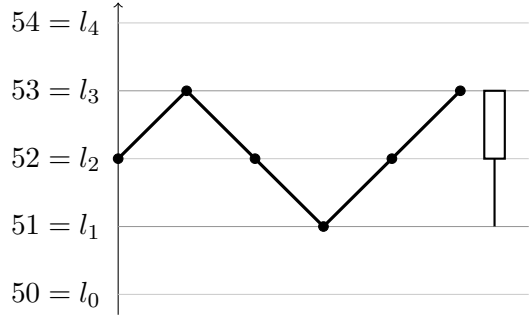
\begin{figure}
	\centering
	\begin{tikzpicture}[scale=0.9]
		\draw[->,color=black!100] (0,-0.3) -- coordinate (y axis mid) (0,4.3); 

		\foreach \x in {0,...,4}{
			\draw[color=black!0] (-3pt,\x) -- (-3pt,\x) node[anchor=east,color=black] {$5\x=l_\x$}; 
		}

		\foreach \x in {1,...,2}{
			\draw[-,color=black!40] (0,2*\x-1) -- coordinate (x axis mid) (6,2*\x-1); 
		}
		\foreach \x in {0,...,2}{
			\draw[-,color=black!20] (0,2*\x) -- coordinate (x axis mid) (6,2*\x); 
		}
		\draw [very thick, color=black!100] (0,2) -- (1,3) -- (2,2) -- (3,1) -- (4,2) -- (5,3);
		\draw[black, fill=black] (0,2) circle[radius=2pt]; 
		\draw[black, fill=black] (1,3) circle[radius=2pt];
		\draw[black, fill=black] (2,2) circle[radius=2pt];
		\draw[black, fill=black] (3,1) circle[radius=2pt];
		\draw[black, fill=black] (4,2) circle[radius=2pt];
		\draw[black, fill=black] (5,3) circle[radius=2pt];

		\candle{5.5}{2}{3}{3}{1};
	\end{tikzpicture}
	\caption{IPF $f^{(s)}$ for IPMS $s$ in Example~\ref{bsp5} with the resulting candle.}
	\label{figIPMS}
	\end{figure}

	This example has the same CR as the IPF $f$ in Example~\ref{bsp2}, which is generalized in the following theorem.

	\begin{thm}\label{contModel}
		Every CR $r\in \set{l_0,\ldots,l_{2m}}^4\times(\IR^+\cup\set{-1})^2$ is the CR of an IPMS.
	\end{thm}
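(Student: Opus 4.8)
The plan is to begin from an IPF that realizes the given CR and to compress its behaviour into a finite zig--zag between levels, which is exactly the data encoded by an IPMS. Since $r$ is a CR, fix an IPF $f\in\mc{C}([a,b],\IR^+)$ with $CR(f)=r$; then $C(f)=(o,c,h,l)\in\set{l_0,\ldots,l_{2m}}^4$ and $R(f)=(entry,exit)$. I must exhibit an IPMS $s$ with $\mc{C}(s)=C(f)$ and $\mc{R}(s)=R(f)$, that is $CR(f^{(s)})=r$. The guiding principle is that the result of $f$ is governed solely by the temporal order in which $f$ first reaches the levels relevant to the orders, so it suffices to build an IPMS that reproduces this order of first passages.

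First I would extract from $f$ a finite set of \emph{marker times}: the endpoints $a$ and $b$, a first time at which $f$ attains the high $h$, a first time at which it attains the low $l$, and, for every order level $L_i$ that $f$ reaches at all, the first time $f$ reaches $L_i$. By Remark~\ref{restrictRepCandle} an entry or exit can occur only at some level $L_i$ or at the open value, and $o,c,h,l$ are themselves levels, so every marker value lies in $\set{l_0,\ldots,l_{2m}}$. Sorting the markers by time and reading off their values gives a finite sequence $v_0=o,v_1,\ldots,v_N=c$ of levels (repeated consecutive values being collapsed). I then define $s$ by concatenating, for each consecutive pair $(v_i,v_{i+1})$, the string of levels with consecutive indices running monotonically from $v_i$ to $v_{i+1}$. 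Since successive entries then differ by exactly one in index, $s$ is a legal IPMS.

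Verifying the candle is routine. The interpolant $f^{(s)}$ starts at $v_0=o$ and ends at $v_N=c$; every $v_i$ lies in $[l,h]$ and each interpolating string stays between its endpoints, so $f^{(s)}$ never leaves $[l,h]$, while the markers for $h$ and $l$ force both extremes to be attained. Hence $\mc{C}(s)=(o,c,h,l)=C(f)$.

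The crux is the equality $\mc{R}(s)=R(f)$. The decisive observation is that, between two consecutive markers, the monotone string of $s$ can cross an order level $L$ only if $f$ had \emph{already} crossed $L$ before the earlier marker time: otherwise the intermediate value theorem would locate a first passage of $f$ through $L$ strictly between the two markers, contradicting their being consecutive. Therefore $s$ and $f$ perform their first passages through all order levels and through $h$ and $l$ in exactly the same order, the possibility of a trigger firing already at the open being accounted for by $v_0=o$. For the admissible setups consisting of a single entry and a single exit, which orders fire and at which prices is a function of this first--passage order alone — this is precisely the information captured by the decision trees of \cite{MP2014} — whence $\mc{R}(s)=R(f)$ and thus $CR(f^{(s)})=(C(f),R(f))=r$. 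Making this functional dependence of the result on the first--passage order precise is the main obstacle; once it is granted, the remaining construction is purely combinatorial.
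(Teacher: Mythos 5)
Your overall strategy---discretize a realizing IPF into a finite zig-zag between the levels---is the right one, but your choice of discretization data is too coarse, and the step you yourself flag as ``the main obstacle'' is in fact false: the result of a setup is \emph{not} a function of the first-passage order alone, because it also depends on which levels the price revisits \emph{after} the entry has fired. Concretely, take the paper's running setup (EnterLongStop at $L_2=53$, stop loss at $L_1=51$, $p=0$) and an IPF $f$ that moves piecewise monotonically $52\to 51\to 53\to 51\to 52$. Its CR is $(open,close,high,low,entry,exit)=(52,52,53,51,53,51)$: the first dip to $51$ is harmless, the rise to $53$ triggers the entry, and the \emph{second} dip to $51$ triggers the stop loss. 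Your marker set records only the open, the first passage of $L_1$ (which is also the first attainment of the low), the first passage of $L_2$ (also the first attainment of the high), and the close, yielding the value sequence $(52,51,53,52)$ and hence the IPMS $s=(l_2,l_1,l_2,l_3,l_2)$. But $\mc{R}(s)=(53,-1)$, since after its entry at $53$ this IPMS never returns to $51$; so $\mc{R}(s)\neq R(f)$ and your construction fails to realize this CR. Moreover, comparing $f$ with the IPF $52\to 51\to 53\to 52$ (result $(53,-1)$, same candle) shows that two IPFs with identical first-passage data can have different results, so no argument can repair the claimed functional dependence without enlarging the marker set.

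The fix is exactly what the paper does: after modifying $f$ (without changing its CR) so that it attains values in $\set{l_0,\ldots,l_{2m}}$ at only finitely many times, sample $f$ at \emph{all} of these times---not just the first passages---and delete consecutive repetitions. The intermediate value theorem then guarantees both that consecutive sampled values are adjacent levels, so the sample is automatically a legal IPMS (no interpolating monotone strings are needed), and that the entire crossing history of every level, hence the result, is preserved. Your candle verification and your IVT reasoning survive unchanged in that setting; the only change needed is to enlarge the marker set from first passages to all level crossings.
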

	\begin{proof}
		Let $f\in\mc{C}([a,b],\IR^+)$ be an IPF with $C(f)\in\set{l_0,\ldots,l_{2m}}^4$ such that $(C(f),R(f))=r$ for an arbitrary but fixed CR $r$.
		Such an $f$ exists by the definition of CRs, see Definition~\ref{def:CR}.

		In order to construct an IPMS with the same result $r$, we use piecewise linear interpolation:
		W.l.o.g. we assume that $f$ assumes values in $\set{l_0,\ldots,l_{2m}}$ only at finitely many discrete points, because otherwise we can easily modify $f$ in such a way without changing the CR $r$.
		Now define $t_1,\ldots,t_k$ s.t.
		\[\begin{split}
			&t_j<t_{j+1}\text{ for all }1\leq j\leq k-1,\\
			&f(t_j)\in\set{l_0,\ldots,l_{2m}}\text{ for all }1\leq j\leq k,\text{ and}\\
			&f(t)\notin\set{l_0,\ldots,l_{2m}}\text{ for all } t\in[a,b]\setminus\set{t_1,\ldots,t_k}.
		\end{split}\]

		Next, we remove all those $t_j$ from the $\set{t_1,\ldots,t_k}$ with $f(t_j)=f(t_{j-1})$ (remember that one requirement for an IPMS $s$ is that $s_{j+1}\neq s_j$) by defining $t'_j$ s.t.
		\[
			\set{t'_1,\ldots,t'_{k'}}=\set{t_1,\ldots,t_k}\setminus\set{t_j\mid f(t_j)=f(t_{j-1})}\quad\text{and}\quad t'_j<t'_{j+1}\text{ for all }1\leq j\leq k'-1.
		\]
		By continuity of $f$ and the intermediate value theorem, $s:=(f(t'_1),\ldots,f(t'_{k'}))$ is an IPMS and, again by the intermediate value theorem, $\mc{R}(s)=R(f)=r$.
	\end{proof}
	Example~\ref{bsp5} gives actually the IPMS obtained from $f$ in Example~\ref{bsp2} by the construction in the proof of Theorem~\ref{contModel}.

	For an algorithmic approach, we want to limit ourselves to finitely many IPMSs from which all CRs can be obtained.
	For this, we need to limit the size of the IPMSs.

	\begin{definition}
		By $M_n$ we denote the set of all CRs of IPMSs of size of at most $n$, i.e.
		\[
			M_n:=\set{r\mid \exists s\in\mc{M},\abs{s}\leq n,r=(\mc{C}(s),\mc{R}(s))}.
		\]
	\end{definition}

	Now we can limit the size of IPMSs used to obtain all possible CRs for a given setup.

	\begin{thm}\label{upperLimit}
		For given setup and $n_0\in\IN$, the following holds true:
		If $M_{n_0}=M_{n_0+1}$, then $M_n=M_{n_0}$ for all $n\geq n_0$.
	\end{thm}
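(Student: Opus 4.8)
The plan is to exhibit a single monotone ``one-step extension'' operator $\Phi$ on sets of CRs for which the identity $M_{n+1}=\Phi(M_n)$ holds for every $n\geq 1$, and then to observe that the hypothesis $M_{n_0+1}=M_{n_0}$ says precisely that $M_{n_0}$ is a fixed point of $\Phi$; stationarity for all larger $n$ then follows by a one-line induction. Concretely, I would first record that for an IPMS $s=(l_{i_1},\dots,l_{i_k})$ the candle $\mc{C}(s)=C(f^{(s)})$ has $open=l_{i_1}$, $close=l_{i_k}$, $high=\max_j l_{i_j}$ and $low=\min_j l_{i_j}$, so that the \emph{close entry of the CR is exactly the current endpoint} $l_{i_k}$ of the walk. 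Appending one admissible step replaces $l_{i_k}$ by a neighbour $l_{i_k\pm1}$ within $\set{l_0,\dots,l_{2m}}$; the new $high$, $low$ and $close$ arise from the old ones by the obvious $\max$, $\min$ and replacement, and the new result arises from the old one by letting the single monotone segment from $l_{i_k}$ to its neighbour trigger whatever order it meets.

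The heart of the argument, and the step I expect to be the main obstacle, is that this update is a function of the CR alone (together with the fixed setup): the CR must be shown to be a \emph{sufficient statistic} for continuing the simulation. To justify this I would note that, since admissible steps join neighbouring levels and the order levels $L_i=l_{2i-1}$ lie among the $l_j$, every trade event occurs exactly at a visited level, and the result records all such events. Hence the pair $(entry,exit)$, together with the fixed setup and the initial position $p$, already determines the current position and the set of still-armed orders — precisely because at most one entry and one exit ever execute, so that whether $entry=-1$ and whether $exit=-1$ fixes which orders are live. Combined with the running $high$, $low$ and the endpoint $close$, all data that a step-by-step evaluation of $R(f^{(s)})$ needs in order to process one further segment are visible in the CR. This yields a well-defined map $\mathrm{ext}(r,\varepsilon)$ sending a CR $r$ and an admissible direction $\varepsilon\in\set{+1,-1}$ to the CR of a one-step extension, independent of which particular IPMS realised $r$.

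Granting this, define $\Phi(X):=X\cup\set{\mathrm{ext}(r,\varepsilon)\mid r\in X,\ \varepsilon\text{ admissible at the endpoint of }r}$. I would then verify both inclusions giving $M_{n+1}=\Phi(M_n)$ for $n\geq 1$: every IPMS of size $\leq n$ is also one of size $\leq n+1$, so $M_n\subseteq M_{n+1}$, and extending such an IPMS by one admissible step realises exactly the CRs in $\mathrm{ext}(M_n)$, whence $\Phi(M_n)\subseteq M_{n+1}$; conversely any IPMS of size $\leq n+1$ is either already of size $\leq n$ or the one-step extension of one of size $n$, so its CR lies in $M_n\cup\mathrm{ext}(M_n)=\Phi(M_n)$, where the crucial point is that $\mathrm{ext}$ depends only on the CR of the shorter walk. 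The operator $\Phi$ is visibly extensive and monotone, though only the defining identity $M_{n+1}=\Phi(M_n)$ is actually needed below.

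Finally I would conclude. The hypothesis $M_{n_0}=M_{n_0+1}$ reads $M_{n_0}=\Phi(M_{n_0})$, i.e. $M_{n_0}$ is a fixed point of $\Phi$. Arguing by induction on $n\geq n_0$, the base case $n=n_0$ is immediate, and the inductive step is $M_{n+1}=\Phi(M_n)=\Phi(M_{n_0})=M_{n_0}$, using the identity $M_{n+1}=\Phi(M_n)$ and the induction hypothesis $M_n=M_{n_0}$. Hence $M_n=M_{n_0}$ for all $n\geq n_0$, as claimed. The only genuinely delicate point is the sufficient-statistic claim; everything else is the standard observation that a fixed point of the reachability step is absorbing.
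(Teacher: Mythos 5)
Your proposal is correct and takes essentially the same route as the paper: your well-definedness of $\mathrm{ext}$ (the CR as a sufficient statistic for one-step continuation) is precisely the paper's key step, in which the truncation $s'$ of a size-$(n+2)$ IPMS is replaced by a shorter IPMS $\bar{s}'$ with the same CR and the final element $l_{i_{n+2}}$ is re-appended, the case analysis on entry/exit there being exactly the justification that appending one step depends only on the CR. Your fixed-point induction via $\Phi$ is the paper's induction $M_n=M_{n+1}\Rightarrow M_{n+1}=M_{n+2}$ recast in operator form.
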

	\begin{proof}
		We show that $M_n=M_{n+1}$ implies $M_{n+1}=M_{n+2}$.
		Then, the claim follows inductively.

		For the following construction we perform four steps. An example for these steps are illustrated in four images in Figure~\ref{figSteps}, respectively, to which we refer correspondingly.

		Let $M_{n}=M_{n+1}$ and $s=(l_{i_1},\ldots,l_{i_{n+2}})$ be an IPMS of size $n+2$ with CR $r=(\mc{C}(s),\mc{R}(s))=(open,close,high,low,entry,exit)\in M_{n+2}$ (cf. Figure~\ref{figStepsA}).
		We have to construct an IPMS of size of at most $n+1$ which has the same CR $r$.

		Consider the IPMS $s'=(l_{i_1},\ldots,l_{i_{n+1}})$ which is the same IPMS as $s$ but shortened by the last element, i.e. we have $s=(s',l_{i_{n+2}})$ (cf. Figure~\ref{figStepsB}). The size of $s'$ is $n+1$ and we denote its CR by $r'=(\mc{C}(s'),\mc{R}(s'))=(open',close',high',low',entry',exit')\in M_{n+1}$.
		By definition of an IPMS we have $\abs{close'-l_{i_{n+2}}}=1$.
		By assumption we have $r'\in M_n$, i.e. there exists an IPMS $\bar{s}'$ of size $\abs{\bar{s}'}\leq n$ with CR $r'$ (cf. Figure~\ref{figStepsC}).
		W.l.o.g. let $\abs{\bar{s}'}=n$ (otherwise, the same proof can be conducted with a different index in use).

		We construct $\bar{s}\in\IR^{n+1}$ as the concatenation $\bar{s}:=(\bar{s}',l_{i_{n+2}})$ with CR $\bar{r}=(\mc{C}(\bar{s}),\mc{R}(\bar{s}))=(\overline{open},\overline{close},\overline{high},\overline{low},\overline{entry},\overline{exit})$ (cf. Figure~\ref{figStepsD}).
		Indeed, $\bar{s}$ is an IPMS, as $\bar{s}_n=\bar{s}'_n=close'$ and by definition $\abs{\bar{s}_n-\bar{s}_{n+1}}=\abs{close'-l_{i_{n+2}}}=1$.

		We now claim that $\bar{s}$ has the CR $r$, i.e. $\bar{r}=r$.
		By the structure as concatenation it is obvious that $\overline{open}=open'=open$ and $\overline{close}=l_{i_{n+2}}=close$.
		Furthermore, the high and low values can be determined as maximum and minimum, respectively:
		\[\begin{split}
			\overline{high}&=\max\set{\max{\bar{s}'},l_{i_{n+2}}}\overset{\mc{C}(\bar{s}')=\mc{C}(s')}{=}\max\{high',l_{i_{n+2}}\}=\max\{l_{i_1},\ldots,l_{i_{n+1}},l_{i_{n+2}}\}=high,\\
			\overline{low}&=\min\set{low',l_{i_{n+2}}}=low.
		\end{split}\]
		Therefore we have $\mc{C}(\bar{s})=\mc{C}(s)$.

		If there is no entry for $s$, the same holds true for $\bar{s}$ because the range $[low,high]=[\overline{low},\overline{high}]$ contains no level of an entry order.
		If the entry occurs during the first $n+1$ interpolation points of $s$, we get $\overline{entry}=entry'=entry$.
		Otherwise the entry of $s$ was executed on the $(n+2)$nd interpolation point with value $entry=l_{i_{n+2}}$.
		Then $entry'$ is $-1$ and we obtain $\overline{entry}=l_{i_{n+2}}$.
		Therefore $entry=\overline{entry}$.
		Analogously, it follows that $\overline{exit}=exit$.

		In conclusion, $\mc{R}(\bar{s})=\mc{R}(s)$ and thus $\bar{r}=r$, but $\abs{\bar{s}}=n+1$ and therefore $r\in M_{n+1}$.
	\end{proof}

	This allows for an algorithmic approach to find all relevant CRs.

	\begin{cor}\label{nSuff}
		In order to obtain all possible CRs of model candles with values in $\set{l_0,\ldots,l_{2m}}$ for a given setup, it suffices to compute $M_{n_0}$, where $n_0=\min\set{n\mid M_n=M_{n+1}}$.
	\end{cor}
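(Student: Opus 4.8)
The plan is to identify the collection of all possible CRs (with candle values in $\set{l_0,\ldots,l_{2m}}$) with the ascending union $\bigcup_{n\in\IN}M_n$, and then invoke Theorem~\ref{upperLimit} to collapse this union down to $M_{n_0}$. The whole argument is a splicing of Theorems~\ref{contModel} and~\ref{upperLimit}, together with one finiteness observation that makes $n_0$ well-defined.

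First I would verify that $n_0=\min\set{n\mid M_n=M_{n+1}}$ actually exists. The sets form an ascending chain $M_1\subseteq M_2\subseteq\cdots$, since any IPMS of size at most $n$ is in particular of size at most $n+1$, so its CR lies in both. The key point is that all these sets live inside one fixed \emph{finite} set: the candle part of any CR under consideration lies in $\set{l_0,\ldots,l_{2m}}^4$, and by the principle recalled in Remark~\ref{restrictRepCandle} (executions can occur only at the levels $L_i$ or at the open value) the entry and exit values can only be elements of $\set{l_0,\ldots,l_{2m}}$ or the symbol $-1$. Hence every $M_n$ is contained in the finite set $\set{l_0,\ldots,l_{2m}}^4\times(\set{l_0,\ldots,l_{2m}}\cup\set{-1})^2$. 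An ascending chain of subsets of a finite set cannot strictly increase indefinitely, so there must exist some $n$ with $M_n=M_{n+1}$, and therefore the minimum $n_0$ exists.

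Next I would combine the two theorems. By Theorem~\ref{upperLimit}, $M_n=M_{n_0}$ for every $n\geq n_0$, and since the chain is ascending we trivially have $M_n\subseteq M_{n_0}$ for $n\leq n_0$; together these give $\bigcup_{n\in\IN}M_n=M_{n_0}$. On the other hand, the set of all possible CRs of model candles with values in $\set{l_0,\ldots,l_{2m}}$ consists exactly of those CRs $(C(f),R(f))$ whose candle part lies in $\set{l_0,\ldots,l_{2m}}^4$. Theorem~\ref{contModel} shows each such CR is realized by some IPMS, and conversely every IPMS has finite size, so this set is precisely $\bigcup_{n\in\IN}M_n$. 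Chaining the two identities yields that $M_{n_0}$ equals the set of all possible CRs, which is the assertion.

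The main obstacle is the stabilization argument guaranteeing that $n_0$ is well-defined; once the ambient CR space is seen to be finite, stabilization of the ascending chain is automatic and the remainder is routine. I would therefore take care to spell out explicitly why the entry and exit coordinates range only over a finite set, citing Remark~\ref{restrictRepCandle}, since this is exactly what bounds the ambient space and forces the chain $M_1\subseteq M_2\subseteq\cdots$ to terminate.
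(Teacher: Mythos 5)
Your proof is correct and takes essentially the same route as the paper, whose own proof is just the one-line observation that the claim follows by combining Theorems~\ref{contModel} and~\ref{upperLimit}. Your extra step verifying that $n_0$ is well-defined --- embedding every $M_n$ in the finite set $\set{l_0,\ldots,l_{2m}}^4\times(\set{l_0,\ldots,l_{2m}}\cup\set{-1})^2$ via the fact that entries and exits occur only at levels $L_i$ or at the open, so the ascending chain must stabilize --- is a detail the paper leaves implicit, and it is a correct and worthwhile supplement.
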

	\begin{proof}
		The claim follows from Theorem~\ref{contModel} and Theorem~\ref{upperLimit}.
	\end{proof}

	\begin{remark}
		In the proof of Theorem~\ref{upperLimit}, we showed that for $\bar{s}'$ and $s'$ the equality
		\[
			(\mc{C}(\bar{s}'),\mc{R}(\bar{s}'))=(\mc{C}(s'),\mc{R}(s'))
		\]
		carries forward after appending $l_{i_{n+2}}$ as
		\[
			(\mc{C}(\bar{s}',l_{i_{n+2}}),\mc{R}(\bar{s}',l_{i_{n+2}}))=(\mc{C}(s',l_{i_{n+2}}),\mc{R}(s',l_{i_{n+2}})).
		\]
		Analogously and by induction, we obtain the following:
		If the CRs of two IPMSs $s_1$ and $s_2$ are the same, so are the CRs of all IPMSs obtained from $s_1$ and $s_2$ by concatenation of arbitrary consistent IPMSs, i.e. if $\mc{C}(s_1)=\mc{C}(s_2)$ and $\mc{R}(s_1)=\mc{R}(s_2)$ then
		\begin{gather*}
			\text{for all } k\in\IN, s_3\in\IR^k\text{ such that }(s_1,s_3)\text{ and }(s_2,s_3)\text{ are IPMSs, the following holds true:}\\
			\mc{C}((s_1,s_3))=\mc{C}((s_2,s_3))\text{ and }\mc{R}((s_1,s_3))=\mc{R}((s_2,s_3)).
		\end{gather*}
		Therefore an algorithm only needs to consider one of these IPMSs when appending other IPMSs, which yields an enormous speedup for the calculation of all model candles and their results.
	\end{remark}

	\begin{sample}\label{bspShorten}
		Continuing Example~\ref{bsp1}, we can apply the results.
		In order to abbreviate notation, we set $l_i:=i$ here.
		Algorithmically we found the value $n_0=11$ to be sufficiently large in the sense of Corollary~\ref{nSuff}.
		The example for the constructions from the proof of Theorem~\ref{upperLimit} shown in Figure~\ref{figSteps} has the following data:
		\[\begin{split}
			s=&(1,2,3,4,3,2,1,0,1,2,3,4,3),~\abs{s}=13,\\
			&(\mc{C}(s),\mc{R}(s))=(open=1,close=3,high=4,low=0,entry=3,exit=1),\\
			s'=&(1,2,3,4,3,2,1,0,1,2,3,4),~\abs{s'}=12,\\
			&(\mc{C}(s'),\mc{R}(s'))=(open'=1,close'=4,high'=4,low'=0,entry'=3,exit'=1),\\
			\bar{s}'=&(1,2,3,2,1,0,1,2,3,4),~\abs{\bar{s}'}=10,~(\mc{C}(\bar{s}'),\mc{R}(\bar{s}'))=(\mc{C}(s'),\mc{R}(s')),\\
			\bar{s}=&(1,2,3,2,1,0,1,2,3,4,3),~\abs{\bar{s}}=11,~(\mc{C}(\bar{s}),\mc{R}(\bar{s}))=(\mc{C}(s),\mc{R}(s)).
		\end{split}\]

		\begin{figure}[bth]
		\centering
		\subfloat[Initial IPMS $s$.]{
		\begin{tikzpicture}[scale=0.45]
			\draw[->,color=black!100] (0,-0.3) -- coordinate (y axis mid) (0,4.3); 

			\foreach \x in {0,...,4}{
				\draw[color=black!0] (-3pt,\x) -- (-3pt,\x) node[anchor=east,color=black] {$\x=l_\x$}; 
			}

			\foreach \x in {1,...,2}{
				\draw[-,color=black!40] (0,2*\x-1) -- coordinate (x axis mid) (13,2*\x-1); 
			}
			\foreach \x in {0,...,2}{
				\draw[-,color=black!20] (0,2*\x) -- coordinate (x axis mid) (13,2*\x); 
			}

			\draw [very thick, color=black!100] (0,1) -- (1,2) -- (2,3) -- (3,4) -- (4,3) -- (5,2) -- (6,1) -- (7,0) -- (8,1) -- (9,2) -- (10,3) -- (11,4) -- (12,3);

			\candle{12.5}{1}{3}{4}{0};
		\end{tikzpicture}
		\label{figStepsA}
		}
		\hfill
		\subfloat[Shortened IPMS $s'$.]{
		\begin{tikzpicture}[scale=0.45]
			\draw[->,color=black!100] (0,-0.3) -- coordinate (y axis mid) (0,4.3); 

			\foreach \x in {0,...,4}{
				\draw[color=black!0] (-3pt,\x) -- (-3pt,\x) node[anchor=east,color=black] {$\x=l_\x$}; 
			}

			\foreach \x in {1,...,2}{
				\draw[-,color=black!40] (0,2*\x-1) -- coordinate (x axis mid) (13,2*\x-1); 
			}
			\foreach \x in {0,...,2}{
				\draw[-,color=black!20] (0,2*\x) -- coordinate (x axis mid) (13,2*\x); 
			}

			\draw [very thick, color=black!100] (0,1) -- (1,2) -- (2,3) -- (3,4) -- (4,3) -- (5,2) -- (6,1) -- (7,0) -- (8,1) -- (9,2) -- (10,3) -- (11,4);

			\candle{12.5}{1}{4}{4}{0};
		\end{tikzpicture}
		\label{figStepsB}
		}

		\subfloat[IPMS $\bar{s}'$ with the same result as $s'$, but smaller size.]{
		\begin{tikzpicture}[scale=0.45]
			\draw[->,color=black!100] (0,-0.3) -- coordinate (y axis mid) (0,4.3); 

			\foreach \x in {0,...,4}{
				\draw[color=black!0] (-3pt,\x) -- (-3pt,\x) node[anchor=east,color=black] {$\x=l_\x$}; 
			}

			\foreach \x in {1,...,2}{
				\draw[-,color=black!40] (0,2*\x-1) -- coordinate (x axis mid) (13,2*\x-1); 
			}
			\foreach \x in {0,...,2}{
				\draw[-,color=black!20] (0,2*\x) -- coordinate (x axis mid) (13,2*\x); 
			}

			\draw [very thick, color=black!100] (0,1) -- (1,2) -- (2,3) -- (3,2) -- (4,1) -- (5,0) -- (6,1) -- (7,2) -- (8,3) -- (9,4);

			\candle{12.5}{1}{4}{4}{0};
		\end{tikzpicture}
		\label{figStepsC}
		}
		\hfill
		\subfloat[IPMS $\bar{s}$ with the same result as $s$, but smaller size.]{
		\begin{tikzpicture}[scale=0.45]
			\draw[->,color=black!100] (0,-0.3) -- coordinate (y axis mid) (0,4.3); 

			\foreach \x in {0,...,4}{
				\draw[color=black!0] (-3pt,\x) -- (-3pt,\x) node[anchor=east,color=black] {$\x=l_\x$}; 
			}

			\foreach \x in {1,...,2}{
				\draw[-,color=black!40] (0,2*\x-1) -- coordinate (x axis mid) (13,2*\x-1); 
			}
			\foreach \x in {0,...,2}{
				\draw[-,color=black!20] (0,2*\x) -- coordinate (x axis mid) (13,2*\x); 
			}

			\draw [very thick, color=black!100] (0,1) -- (1,2) -- (2,3) -- (3,2) -- (4,1) -- (5,0) -- (6,1) -- (7,2) -- (8,3) -- (9,4) -- (10,3);

			\candle{12.5}{1}{3}{4}{0};
		\end{tikzpicture}
		\label{figStepsD}
		}

		\caption{Construction of smaller IPMS with the same CR (cf. Proof of Theorem~\ref{upperLimit} and Example~\ref{bspShorten}).}
		\label{figSteps}
		\end{figure}
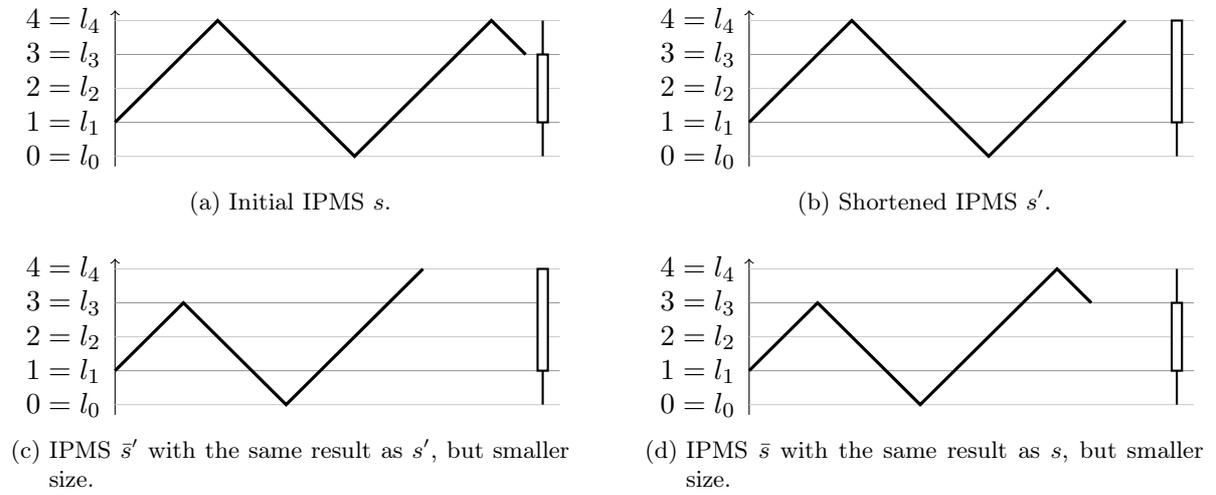
	\end{sample}


	\section{Conclusion}
	\label{sec:conclusion}

	In this work, we provided the utilities to test correctness of backtest engines for setups with at most one entry and one exit.
	By this many practical situations are covered, such as EnterLong/Short Limit/Stop with accompanying intra-period stop loss and target orders.

	By constructing all relevant intra-period price functions, which we could limit to finitely many intra-period model price series (IPMSs), and then running a reference backtest for IPMSs, we can obtain the correct result for all model candles.
	Comparing these results with the results of a given backtest engine on all model candles, we can decide correctness of the backtest engine under the assumption of stability under transformations by the results of Section~\ref{sec:correctness}.

	Many of those concepts can be generalized to more complex situations with more than one entry or exit.
	It remains to be shown how our results can be transferred to these setups.
	Furthermore, in that case there would no longer necessarily be unique worst cases and best cases.

	Another extension of our work could be the consideration of other order types, e.g. OCO-orders (i.e. ``one cancels other''), which would require a revision of the theory presented.


	\section*{Acknowledgement}
	Robert Löw was funded by an Undergraduate Fund Project 2015, RWTH Aachen.



\begin{thebibliography}{10}

	\bibitem{BBP+2014}
	D.~H. Bailey, J.~M. Borwein, M.~L. {de Prado}, and Q.~J. Zhu.
	\newblock The probability of backtest overfitting.
	\newblock Available at SSRN, DOI:
	  \href{http://dx.doi.org/10.2139/ssrn.2326253}{10.2139/ssrn.2326253}, 2014.

	\bibitem{BBP+2014a}
	D.~H. Bailey, J.~M. Borwein, M.~L. {de Prado}, and Q.~J. Zhu.
	\newblock Pseudo-mathematics and financial charlatanism: The effects of
	  backtest overfitting on out-of-sample performance.
	\newblock {\em Notices of the AMS}, 61(5):458--471, 2014.
	\newblock \url{http://www.ams.org/notices/201405/rnoti-p458.pdf}.

	\bibitem{CP2014}
	P.~P. Carr and M.~L. {de Prado}.
	\newblock Determining optimal trading rules without backtesting.
	\newblock \href{http://arxiv.org/abs/1408.1159}{arXiv:1408.1159} [q-fin.PM],
	  2014.

	\bibitem{Chan2009}
	E.~P. Chan.
	\newblock {\em Quantitative Trading: How to Build Your Own Algorithmic Trading
	  Business}.
	\newblock Wiley trading series. John Wiley \& Sons, Hoboken, 2009.

	\bibitem{Harris2008}
	M.~Harris.
	\newblock {\em Profitability and Systematic Trading}.
	\newblock Wiley trading series. John Wiley \& Sons, Hoboken, 2008.

	\bibitem{IT2012}
	S.~Izraylevich and V.~Tsudikman.
	\newblock {\em Automated Option Trading: Create, Optimize, and Test Automated
	  Trading Systems}.
	\newblock FT Press, Upper Saddle River, NJ, 2012.

	\bibitem{MP2014}
	S.~Maier-Paape and A.~Platen.
	\newblock Backtest of trading systems on candle charts.
	\newblock {\em To appear in IFTA Journal, 2016 Edition}, 2014.
	\newblock \href{http://arxiv.org/abs/1412.5558}{arXiv:1412.5558} [q-fin.TR].

	\bibitem{NZ2005}
	J.~Ni and C.~Zhang.
	\newblock An efficient implementation of the backtesting of trading strategies.
	\newblock In Y.~Pan, D.~Chen, M.~Guo, J.~Cao, and J.~Dongarra, editors, {\em
	  Parallel and Distributed Processing and Applications}, volume 3758 of {\em
	  Lecture Notes in Computer Science}, pages 126--131. Springer, Heidelberg,
	  2005.
	\newblock DOI:
	  \href{http://dx.doi.org/10.1007/11576235_17}{10.1007/11576235\_17}.

	\bibitem{Pardo1992}
	R.~Pardo.
	\newblock {\em Design, Testing, and Optimization of Trading Systems}.
	\newblock John Wiley \& Sons, Hoboken, 1992.

	\bibitem{Pardo2008}
	R.~Pardo.
	\newblock {\em The Evaluation and Optimization of Trading Strategies}.
	\newblock Wiley trading series. John Wiley \& Sons, Hoboken, 2nd edition, 2008.

	\end{thebibliography}
\end{document}